\documentclass[10pt, conference]{IEEEtran}
\IEEEoverridecommandlockouts 
\usepackage{cite}
\usepackage{amsmath,amssymb,amsfonts,amsthm}
\usepackage{mathtools}
\usepackage{bm}
\usepackage{algorithmic}
\usepackage{makecell}
\usepackage{graphicx}
\usepackage{textcomp}
\usepackage{enumitem}
\usepackage{comment}
\usepackage{xcolor}
\usepackage{booktabs}
\newtheorem{theorem}{Theorem}[section]
\newtheorem{lemma}[theorem]{Lemma}

\newtheorem{definition}[theorem]{Definition}

\newtheorem{remark}[theorem]{Remark}

\newcommand{\alphaDL}{\boldsymbol{\alpha}^{\mathrm{DL}}}
\newcommand{\alphaUL}{\boldsymbol{\alpha}^{\mathrm{UL}}}
\newcommand{\bfAlpha}{\boldsymbol{\alpha}}
\newcommand{\UDL}{\mathcal{U}_{\mathrm{DL}}}
\newcommand{\UUL}{\mathcal{U}_{\mathrm{UL}}}

\newcommand{\JDL}{J_{\mathrm{DL}}}
\newcommand{\JUL}{J_{\mathrm{UL}}}
\newcommand{\Cfib}{\hat{C}^{\mathrm{fib}}}
\newcommand{\Cntn}{\hat{C}^{\mathrm{ntn}}}
\newcommand{\bhat}{\hat{b}}
\newcommand{\Pot}{\Phi}

\begin{document}

\title{SLA-Aware Traffic Steering in Hybrid TN-NTN 5G Backhaul: A Potential Game Approach}

\author{\IEEEauthorblockN{Hojjat Navidan\IEEEauthorrefmark{1}, Delia Rico\IEEEauthorrefmark{2}, Mohammad Cheraghinia\IEEEauthorrefmark{1}, Ingrid Moerman\IEEEauthorrefmark{1}, Adnan Shahid\IEEEauthorrefmark{1}} \\
\IEEEauthorblockA{\IEEEauthorrefmark{1}IDLab, Department of Information Technology, Ghent University – imec, Belgium}
\IEEEauthorblockA{\IEEEauthorrefmark{2} ITIS Software, University of Malaga, Spain 
\thanks{This work was supported by the Horizon Europe MCSA Staff Exchanges 2021 program under grant agreement 101086218 (EVOLVE); the Smart Networks and Services Joint Undertaking (SNS JU) under the EU Horizon Europe research and innovation program under Grant Agreements No. 101139194 (6G-XCEL project) and No. 101096328 (6G-SANDBOX project); and the Spanish national project LearnFDT under grant agreement PID2022-142181OB-I00.}
}}

\maketitle

\begin{abstract}
The integration of Non-Terrestrial Networks (NTN) with Terrestrial Networks (TN) is a key enabler for resilient 5G-Advanced and future 6G backhaul infrastructures. However, managing traffic across these highly asymmetric links remains a significant routing challenge, as systems must support heterogeneous network slices with conflicting service-level agreements (SLAs) while selectively utilizing costly NTN resources. 
This paper presents a computationally lightweight SLA-aware traffic-steering framework for a hybrid TN-NTN backhaul that models the load-balancing problem as an exact potential game. This mathematical foundation inherently enables decentralized coordination between uplink and downlink load-balancing agents without control-message overhead. By formulating traffic steering as a coupled optimization problem, per-slice (or per-user group) traffic fractions are dynamically distributed across terrestrial and satellite paths based on utility functions that capture throughput, latency, packet loss, and SLA penalties. The resulting game admits a pure Nash equilibrium, ensuring stable and predictable traffic adaptation under non-stationary load conditions. The framework is evaluated on a geographically distributed 5G testbed, 
using bidirectional traffic generated for five representative slices. Experimental results show that the proposed controller significantly outperforms heuristic and conventional baselines, reducing SLA violations to 1.7\% for V2X and 0.7\% for the emergency slice while completely eliminating them for video, IoT, and best-effort traffic.  
\end{abstract}

\begin{IEEEkeywords}
Load balancing, 5G, 6G, Non-Terrestrial Networks, Terrestrial Networks, Backhaul, Game Theory
\end{IEEEkeywords}

\section{Introduction}
The evolution toward 5G-Advanced and emerging 6G network architectures relies heavily on the seamless integration of Non-Terrestrial Networks (NTN), such as Low Earth Orbit (LEO) satellite constellations, with traditional Terrestrial Networks (TN) \cite{azariEvolutionNonTerrestrialNetworks2022,3GPP}. This TN-NTN hybrid infrastructure is essential to ensure global coverage and network resilience during outages \cite{shanwongIntegrationNonterrestrialNetwork2026}. Augmenting fiber-optic backhauls with satellite links enables network operators to dynamically pool capacity to meet the increasing bandwidth demands of next-generation mobile applications \cite{diUltraDenseLEOIntegrating2019}.

Furthermore, the 5G network-slicing paradigm requires this unified backhaul infrastructure to support highly heterogeneous traffic classes with strictly enforced, yet conflicting, Service-Level Agreements (SLAs) \cite{zhangNetworkSlicingBased2017}. These SLA constraints span a diverse range of network slices, from latency-sensitive Vehicle-to-Everything (V2X) and emergency services, to massive Internet of Things (IoT) slices designed for delay-tolerant data bursts. Satisfying these diverse Quality of Service (QoS) requirements demands that the network edge possess the intelligence to seamlessly steer per-User Equipment (UE) traffic flows across available backhaul paths. This will ultimately maximize the overall network performance in terms of throughput, latency, and packet loss, ensuring a consistent user experience without starving any individual slice \cite{puglieseIntegratingTerrestrialNonterrestrial2024}.

However, managing traffic across a hybrid TN-NTN architecture poses complex routing challenges due to severe link asymmetry and economic limitations \cite{puglieseIntegratingTerrestrialNonterrestrial2024}. While terrestrial fiber links offer low latency and high capacity, NTN links inherently suffer from substantial propagation delays and highly variable channel conditions \cite{xuLinkStateAwareHybrid2022}. Importantly, using NTN resources typically results in significantly higher operational costs due to the inherent expense of satellite bandwidth and the complex infrastructure required for maintenance \cite{sawadBackhaul5GSystems2023}. 

Therefore, satellite resources should not be used indiscriminately for all traffic, but rather employed selectively when terrestrial backhaul becomes congested or when resilience is required \cite{shikderTrafficFlowSteering2022a}. This calls for an intelligent load-balancing and traffic-steering mechanism that dynamically accounts for SLA requirements while balancing the latency penalties and resource costs associated with satellite transport. Traditional static load-balancing strategies fail to adapt to the highly asymmetric and cost-sensitive dynamics of hybrid TN-NTN links, risking severe QoS degradation and SLA violations under fluctuating loads.

We propose a computationally lightweight, game-theoretic, proactive traffic steering mechanism that dynamically steers per-slice (or per-UE group) traffic across asymmetric TN and NTN backhaul paths.
By incentivizing latency-tolerant traffic to offload to the NTN link during heavy load events, the proposed system proactively reserves the terrestrial path for mission-critical slices.

The core contributions of this paper are fourfold:
\begin{itemize}
    \item  We formulate the per-slice traffic steering problem as an exact potential game, mapping competing SLAs into utility functions governed by a shared link-capacity penalty.
    \item  We design a proactive, decentralized routing controller to solve the game. This computationally lightweight approach anticipates congestion and mathematically guarantees rapid convergence to a pure Nash equilibrium.
    \item We empirically validate the proposed framework on a live, geographically distributed 5G testbed, utilizing commercial Starlink NTN and terrestrial fiber links.
    \item We demonstrate that our approach outperforms static and reactive baselines,  reducing SLA violations for priority slices under non-stationary traffic.
\end{itemize}

\section{Related Work}

Integration of satellites into terrestrial backhauls has become a major research direction for beyond-5G and 6G systems 
\cite{azariEvolutionNonTerrestrialNetworks2022}.
Recent surveys highlight that hybrid TN-NTN architectures introduce substantial challenges in routing, resource allocation, and QoS assurance due to heterogeneous propagation delays, capacity, and channel dynamics \cite{tirmiziHybridSatelliteTerrestrial2022}.

In multi-RAT and 5G environments, distributed and centralized load-balancing schemes have been proposed to address QoS requirements and uneven traffic distribution 
\cite{seyoumDistributedLoadBalancing2023}. Similarly, 5G access traffic steering, switching, and splitting mechanisms split flows across heterogeneous access points based on utility and service requirements \cite{baMultiserviceBasedTrafficScheduling2022}. While these approaches demonstrate the value of adaptive traffic steering, they are typically developed for multi-access terrestrial settings and do not explicitly address the challenges of TN-NTN backhaul.

Focusing on the satellite integration, Shikder et al. \cite{shikderTrafficFlowSteering2022a} proposed a traffic-flow steering algorithm that classifies flows by delay tolerance and redirects suitable traffic to the satellite path under terrestrial congestion. 
More generally, recent hybrid TN-NTN studies emphasize dynamic path selection in response to changing link conditions~\cite{tirmiziHybridSatelliteTerrestrial2022}. However, most of these approaches rely on heuristic rules or focus on generic throughput or congestion objectives without explicitly encoding per-slice SLA penalties.

Proactive traffic steering in O-RAN has been shown to reduce queuing delay relative to reactive methods, and network-aided intelligent frameworks optimize flow splitting using utility-based formulations~\cite{tamimIntelligentORANTraffic2023}. Traffic classification for traffic steering using Deep Packet Inspection (DPI) or machine learning introduces additional processing overhead at the network edge~\cite{azabNetworkTrafficClassification2024}. Our solution avoids these bottlenecks entirely by performing SLA-aware traffic allocation at the IP transport layer using only aggregated utility metrics. Slice identification is performed by lightweight parsing of the encapsulated packet headers, which is less computationally demanding than full payload DPI.

Despite these advances, existing studies lack a unified optimization framework for hybrid TN-NTN backhaul. They neither enforce heterogeneous SLA constraints across slices nor explicitly model the coupled interaction between uplink and downlink traffic competing for shared resources. In contrast, our approach embeds SLA penalties directly into per-slice utility functions and solves the problem via a provably convergent exact potential game, guaranteeing stable, decentralized adaptation under non-stationary loads.

\section{System Model and Game Theory Framework}
\label{sec:model}
We consider a hybrid TN–NTN backhaul architecture as illustrated in Figure \ref{fig:archi}. Two independent traffic-steering agents (downlink and uplink) operate at opposite ends of the backhaul. Each agent dynamically allocates traffic fractions across a low-latency terrestrial fiber link and a high-latency, capacity-constrained NTN path. The agents make decisions locally using only their own observations and interface-level telemetry without exchanging any control messages. 

\begin{figure}[h]
    \centering
    \includegraphics[width=0.85\linewidth]{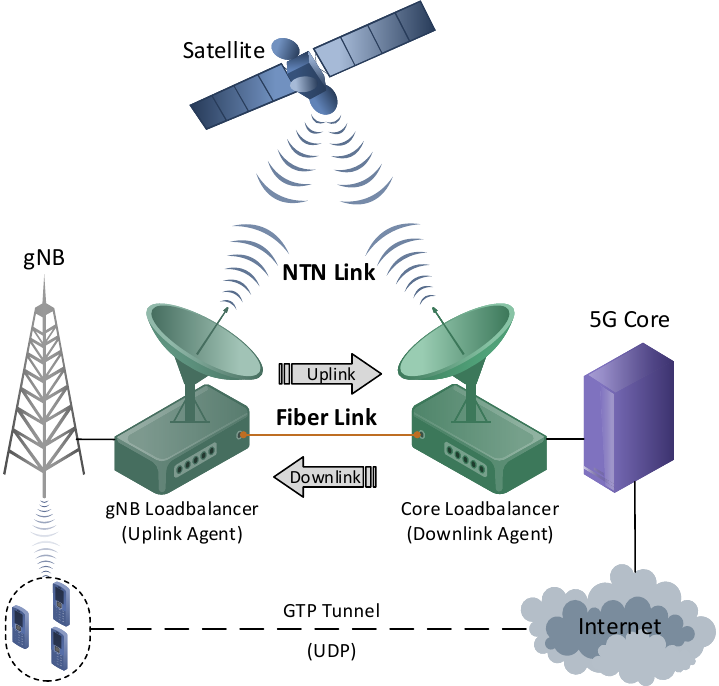}
    \caption{System overview of the considered hybrid backhaul, where traffic is dynamically steered between terrestrial fiber and NTN satellite paths.}
    \label{fig:archi}
\end{figure}

Fractional traffic splitting is enforced dynamically at the IP layer. The load balancers identify the corresponding slice type of individual UDP-based GPRS Tunneling Protocol (GTP) flows by inspecting packet headers within the encapsulated payload and apply routing rules to distribute traffic across the two backhaul paths. 

Each slice or UE-group $u$ is assigned a relative priority weight $\pi_u$, and a set of transport-layer SLA constraints, namely maximum Round-Trip Time (RTT), jitter, and packet loss. While 5G SLAs are traditionally defined end-to-end, we focus exclusively on the backhaul transport layer. The RAN segment is treated as a transparent conduit, and the controller optimizes only routing over the TN and NTN links.


\subsection{Game-Theoretic Framework}
\label{sec:game_framework}

Based on the two-agent definition, we define a noncooperative game as a triple 
\[
    \Gamma \;=\;
    \Bigl\{\,
        \mathcal{N},\;
        \bigl\{\mathcal{A}_i\bigr\}_{i \in \mathcal{N}},\;
        \bigl\{J_i\bigr\}_{i \in \mathcal{N}}
    \,\Bigr\},
\]
where $\mathcal{N} = \{\mathrm{DL},\, \mathrm{UL}\}$ is the set of
players, $\mathcal{A}_i$ is the strategy space of player $i$, and $J_i$ is the payoff
function of player $i$.

\paragraph{Players}
The downlink agent controls downstream traffic allocation for the $N$ slices in the set
$\UDL = \{1, \ldots, N\}$, and the uplink agent controls upstream allocation for the $M$ slices in
the set $\UUL = \{1, \ldots, M\}$.

\paragraph{Strategies}
Each player selects a vector of per-slice NTN allocation fractions. For the downlink,
$\alphaDL = (\alpha_1^{\mathrm{DL}}, \ldots, \alpha_N^{\mathrm{DL}}) \in [0,1]^N$,
where $\alpha_u^{\mathrm{DL}} \in [0,1]$ denotes the fraction of slice $u$'s downlink
traffic routed over the NTN link (and the remainder $1 - \alpha_u^{\mathrm{DL}}$
traverses the fiber link). Analogously,
$\alphaUL = (\alpha_1^{\mathrm{UL}}, \ldots, \alpha_M^{\mathrm{UL}}) \in [0,1]^M$
for the uplink agent. The joint strategy space is
$\mathcal{A} = [0,1]^N \times [0,1]^M$, a compact and convex set.

\paragraph{Payoff functions}
Each player's payoff function is a weighted sum of per-slice utilities minus a
shared capacity coupling penalty. Let $\alpha_u \in \{\alpha_u^{\mathrm{DL}},\alpha_u^{\mathrm{UL}} \}$ denote the respective player's allocation for slice $u$; its utility $U_u(\alpha_u)$ is defined as
\begin{multline}
    U_u(\alpha_u) \;=\;
        w_T^{c_u}\,T_u(\alpha_u)
        + w_L^{c_u}\,L_u(\alpha_u)
        + w_R^{c_u}\,R_u(\alpha_u) \\
        - w_P^{c_u}\,\lambda^{c_u}\,P_u(\alpha_u),
    \label{eq:per_ue_utility}
\end{multline}
where $w_T^{c_u}, w_L^{c_u}, w_R^{c_u}, w_P^{c_u} \geq 0$ are traffic-class-specific
weights, $\lambda^{c_u} \geq 0$ is a class severity multiplier, and the four component
terms are defined in Table~\ref{tab:utility_terms}.

Here $\Cfib$ and $\Cntn$ are the adaptive capacity estimates of the fiber and NTN links,
respectively; $d^{\mathrm{fib}}, d^{\mathrm{ntn}}$ are the measured round-trip latencies;
$d^{\max}_{c_u}$ is the SLA latency ceiling for traffic class $c_u$; $\ell^{\mathrm{fib}}$
and $\ell^{\mathrm{ntn}}$ are the fractional packet loss rates; $m_k(\alpha_u) =
\alpha_u\,\mathtt{metric}_{k,\mathrm{ntn}} + (1-\alpha_u)\,\mathtt{metric}_{k,\mathrm{fib}}$
is the blended SLA metric on dimension $k \in \{\mathrm{rtt, loss, jitter}\}$; and
$s_k^{c_u} = \mathrm{SLA}_k^{c_u}$ is the corresponding SLA threshold.

The capacity coupling penalty $C(\alphaDL, \alphaUL)$ captures the joint oversubscription cost shared by both agents. For each agent~$i \in \{\text{DL}, \text{UL}\}$, the load contributions to the NTN and fiber links are
\begin{align}
    D_{\mathrm{ntn}}^{i}(\alpha^i)
    &= \sum_{u \in \mathcal{U}_i} \alpha_u^{i}\,\bhat_u^{i}, &
    D_{\mathrm{fib}}^{i}(\alpha^i)
    &= \sum_{u \in \mathcal{U}_i} (1-\alpha_u^{i})\,\bhat_u^{i},
    \label{eq:own_load}
\end{align}
with $\bhat_u^{i} > 0$ denoting estimated peak throughput demands.
The aggregate link demands are
\begin{align}
    D_{\mathrm{ntn}}(\alphaDL, \alphaUL)
    &= D_{\mathrm{ntn}}^{DL}(\alphaDL) + D_{\mathrm{ntn}}^{UL}(\alphaUL),
    \label{eq:D_ntn} \\
    D_{\mathrm{fib}}(\alphaDL, \alphaUL)
    &= D_{\mathrm{fib}}^{DL}(\alphaDL) + D_{\mathrm{fib}}^{UL}(\alphaUL),
    \label{eq:D_fib}
\end{align}
and the coupling penalty is
\begin{multline}
    C(\alphaDL, \alphaUL)
    \;=\;
    \bigl[\max\!\bigl(0,\,
        D_{\mathrm{ntn}}(\alphaDL, \alphaUL) - \Cntn\bigr)\bigr]^2
    \;+\; \\
    \bigl[\max\!\bigl(0,\,
        D_{\mathrm{fib}}(\alphaDL, \alphaUL) - \Cfib\bigr)\bigr]^2.
    \label{eq:coupling_penalty}
\end{multline}
Let $\mu > 0$ be the coupling coefficient governing the aggressiveness of
oversubscription penalization. The payoff functions of the two agents are then
\begin{align}
    \JDL(\alphaDL, \alphaUL)
    &= \sum_{u \in \UDL} \pi_u\, U_u(\alpha_u^{\mathrm{DL}})
       - \mu\, C(\alphaDL, \alphaUL),
    \label{eq:JDL} \\[4pt]
    \JUL(\alphaDL, \alphaUL)
    &= \sum_{v \in \UUL} \pi_v\, U_v(\alpha_v^{\mathrm{UL}})
       - \mu\, C(\alphaDL, \alphaUL).
    \label{eq:JUL}
\end{align}
To achieve a Nash equilibrium, each agent repeatedly solves
its best-response problem while the other agent fixes its strategy:
\begin{align}
    \alphaDL{}^*
    &= \arg\max_{\alphaDL \in [0,1]^N}
       \JDL(\alphaDL, \alphaUL{}^*),
    \label{eq:BR_DL}\\[2pt]
    \alphaUL{}^*
    &= \arg\max_{\alphaUL \in [0,1]^M}
       \JUL(\alphaDL{}^*, \alphaUL).
    \label{eq:BR_UL}
\end{align}

\begin{remark}[Decentralized execution]

Since backhaul traffic is bidirectional, each agent's RX counters on the NTN and fiber
interfaces directly measure the opposing agent's aggregate load. Each agent therefore
solves~\eqref{eq:BR_DL}--\eqref{eq:BR_UL} using only local telemetry, without
inter-agent communication.
\end{remark}

\begin{table}[t]
\centering
\caption{Component terms of the per-slice utility $U_u(\alpha_u)$.}
\label{tab:utility_terms}
\renewcommand{\arraystretch}{1.8} 
\begin{tabular}{| >{\raggedright\arraybackslash}m{0.9cm} | >{\raggedright\arraybackslash}m{3.9cm} | >{\centering\arraybackslash}m{2.7cm} |}
\hline
\textbf{Term} & \textbf{Formula} & \textbf{Interpretation} \\
\hline \hline
$T_u(\alpha_u)$ & $\ln\!\bigl(1 + \alpha_u\Cntn + (1-\alpha_u)\Cfib\bigr)$ & Throughput utility (log, proportional fairness) \\
\hline
$L_u(\alpha_u)$ & $-\dfrac{\alpha_u d^{\mathrm{ntn}} + (1-\alpha_u)d^{\mathrm{fib}}}{d^{\max}_{c_u}}$ & Latency utility (normalised, negative cost) \\
\hline
$R_u(\alpha_u)$ & $\alpha_u(1-\ell^{\mathrm{ntn}}) + (1-\alpha_u)(1-\ell^{\mathrm{fib}})$ & Reliability (expected packet survival) \\
\hline
$P_u(\alpha_u)$ & $\sum_{k}\!\left(\dfrac{\max(0,\,m_k(\alpha_u)-s_k^{c_u})}{s_k^{c_u}}\right)^{\!2}$ & SLA violation penalty (quadratic barrier) \\
\hline
\end{tabular}
\end{table}

\subsection{Exact Potential Game Formulation}
\label{sec:exact_potential_game}

We now formalize the game as an \emph{exact potential game} by
constructing an explicit potential function and proving the alignment condition of
Monderer and Shapley~\cite{MONDERER1996124}.

\begin{definition}[Nash equilibrium]
\label{def:NE}
A joint strategy profile
$\bfAlpha^* = (\alphaDL{}^*, \alphaUL{}^*) \in [0,1]^N \times [0,1]^M$
is a \emph{Nash equilibrium} of the game $\Gamma$ if and only if
\begin{align}
    \JDL(\alphaDL{}^*, \alphaUL{}^*)
    &\geq \JDL(\alphaDL, \alphaUL{}^*),
    \quad \forall\,\alphaDL \in [0,1]^N,
    \label{eq:NE_DL}\\[4pt]
    \JUL(\alphaDL{}^*, \alphaUL{}^*)
    &\geq \JUL(\alphaDL{}^*, \alphaUL),
    \quad \forall\,\alphaUL \in [0,1]^M.
    \label{eq:NE_UL}
\end{align}
At a Nash equilibrium, neither agent can unilaterally improve its own payoff by
deviating from $\bfAlpha^*$.
\end{definition}

\begin{definition}[Exact Potential Game \cite{MONDERER1996124}]
\label{def:EPG}
A game $\Gamma = \{\mathcal{N}, \{\mathcal{A}_i\}, \{J_i\}\}$ is an
\emph{exact potential game} if there exists a function
$\Pot : \mathcal{A} \to \mathbb{R}$, called an \emph{exact potential function},
such that for every player $i \in \mathcal{N}$ and for every two strategy profiles
that differ only in player $i$'s action:
\begin{equation}
    J_i(\bm{a}_i,\, \bm{a}_{-i}) - J_i(\bm{a}_i',\, \bm{a}_{-i})
    =
    \Pot(\bm{a}_i,\, \bm{a}_{-i}) - \Pot(\bm{a}_i',\, \bm{a}_{-i}),
    \label{eq:EPG_condition}
\end{equation}
where $\bm{a}_{-i}$ is the strategies of all players except player $i$.
\end{definition}

The key implication of an exact potential game is that any unilateral improvement by one
agent increases $\Pot$ by the exact same amount as it increases that agent's own payoff. As a result, best-response dynamics steadily ascend the potential function, ensuring convergence to a pure Nash equilibrium.

\begin{theorem}[Exact Potential Function]
\label{thm:potential_function}
Let $ \bfAlpha = (\alphaDL, \alphaUL)  $ denote the joint allocation vector. Given the payoff functions $\JDL$ and $\JUL$, the load-balancing game $\Gamma$ is an exact potential game with an exact potential function

\begin{equation}
        \Pot(\bfAlpha) = 
        \sum_{u \in \UDL} \pi_u\, U_u(\alpha_u^{\mathrm{DL}})
        \;+\;
        \sum_{v \in \UUL} \pi_v\, U_v(\alpha_v^{\mathrm{UL}})
        -
        \mu\, C(\bfAlpha).
\label{eq:potential_function}
\end{equation}

\end{theorem}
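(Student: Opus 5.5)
We verify the Monderer--Shapley condition \eqref{eq:EPG_condition} of Definition~\ref{def:EPG} directly, separately for each player. The key structural observation is that the candidate potential \eqref{eq:potential_function} decomposes as
\[
\Pot(\bfAlpha) \;=\; \JDL(\alphaDL,\alphaUL) \;+\; \sum_{v\in\UUL}\pi_v\,U_v(\alpha_v^{\mathrm{UL}})
\;=\; \JUL(\alphaDL,\alphaUL) \;+\; \sum_{u\in\UDL}\pi_u\,U_u(\alpha_u^{\mathrm{DL}}),
\]
which follows immediately from \eqref{eq:JDL} and \eqref{eq:JUL}. The argument rests on both payoffs sharing the same coupling term $-\mu C(\bfAlpha)$. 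The only term that links the two players is this shared penalty; the private utility sums are additively separable.

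We first establish this separability carefully. Each per-slice utility $U_u$ in \eqref{eq:per_ue_utility} depends on the allocation vector only through the player's own coordinate $\alpha_u$. The remaining quantities, namely $\Cfib$, $\Cntn$, $d^{\mathrm{fib}}$, $d^{\mathrm{ntn}}$, $\ell^{\mathrm{fib}}$, $\ell^{\mathrm{ntn}}$, the metrics in $m_k$, the SLA thresholds $s_k^{c_u}$ and the weights, are treated as exogenous parameters fixed during the decision step. This is exactly how Table~\ref{tab:utility_terms} is written. Hence $\sum_{v\in\UUL}\pi_vU_v(\alpha_v^{\mathrm{UL}})$ is a function of $\alphaUL$ alone, and $\sum_{u\in\UDL}\pi_uU_u(\alpha_u^{\mathrm{DL}})$ is a function of $\alphaDL$ alone.

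For the downlink player, take any $\alphaDL,\alphaDL{}'\in[0,1]^N$ with $\alphaUL$ held fixed. Subtracting the decomposition above at the two profiles, the term $\sum_{v}\pi_vU_v(\alpha_v^{\mathrm{UL}})$ cancels. This yields
\[
\Pot(\alphaDL,\alphaUL)-\Pot(\alphaDL{}',\alphaUL)=\JDL(\alphaDL,\alphaUL)-\JDL(\alphaDL{}',\alphaUL).
\]
The uplink case is symmetric, using the second decomposition. This establishes \eqref{eq:EPG_condition} for both $i\in\mathcal{N}$, so $\Gamma$ is an exact potential game with potential $\Pot$.

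The main point needing care is the exogeneity assumption. If the link measurements, for example $d^{\mathrm{ntn}}$ or $\Cntn$, were modelled as functions of the aggregate loads $D_{\mathrm{ntn}}$ and $D_{\mathrm{fib}}$, the utilities would no longer be separable and the cancellation would fail. I will therefore state explicitly that these quantities are held constant within a decision epoch, which is consistent with the table's definitions. All cross-player dependence then enters solely through $C$ in \eqref{eq:coupling_penalty}, and that term appears identically in $\JDL$, $\JUL$ and $\Pot$.

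Continuity of $\Pot$ follows because $\max(0,\cdot)^2$, $\ln$ and affine maps are all continuous. Together with compactness of $\mathcal{A}$, this gives existence of a maximizer of $\Pot$, hence a pure Nash equilibrium. This can be noted as a corollary but is not needed for the theorem itself.
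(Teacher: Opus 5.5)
Your proof is correct and follows the same route as the paper. The paper's proof simply asserts that ``direct substitution'' shows each unilateral payoff change equals the change in $\Pot$, and your decomposition $\Pot = \JDL + \sum_{v\in\UUL}\pi_v U_v(\alpha_v^{\mathrm{UL}}) = \JUL + \sum_{u\in\UDL}\pi_u U_u(\alpha_u^{\mathrm{DL}})$ is that substitution written out, with your caveat that the link parameters ($\Cntn$, $\Cfib$, $d^{\mathrm{ntn}}$, etc.) are held exogenous making explicit an assumption the paper leaves implicit.
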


\begin{proof}
It can be verified by direct substitution that for any unilateral deviation of one agent (holding the other fixed), the change in the payoff of that agent equals the change in $\Pot$, satisfying the exact potential condition of~\cite{MONDERER1996124}.
\end{proof}
\subsection{Equilibrium Analysis and Convergence}

\label{sec:convergence}
Having established that the load-balancing game is an exact potential game, we now prove that it admits a \emph{unique} Nash equilibrium by leveraging the strict concavity
of the potential function $\Pot$.

\begin{lemma}[Existence of Nash equilibrium]
\label{lem:NE_existence}
A pure-strategy Nash equilibrium exists for the game $\Gamma$.
\end{lemma}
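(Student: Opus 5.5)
The plan is to combine Theorem~\ref{thm:potential_function} with a compactness argument: maximize the potential $\Pot$ over the joint strategy space, and show that any global maximizer is a pure-strategy Nash equilibrium.

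\textbf{Continuity.} First we verify that $\Pot$ is continuous on $\mathcal{A} = [0,1]^N \times [0,1]^M$.
\begin{itemize}[leftmargin=*]
\item Each throughput term $T_u(\alpha_u) = \ln(1 + \alpha_u\Cntn + (1-\alpha_u)\Cfib)$ is continuous. Its argument is an affine function of $\alpha_u$ bounded below by $1 + \min(\Cntn,\Cfib) \geq 1 > 0$, since capacity estimates are nonnegative.
\item The latency term $L_u$ and the reliability term $R_u$ are affine in $\alpha_u$.
\item The penalty $P_u$ is a finite sum of squares of $\max(0,\cdot)$ applied to affine functions $m_k(\alpha_u)$, divided by positive thresholds $s_k^{c_u}$. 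It is therefore continuous.
\item The coupling penalty $C$ in \eqref{eq:coupling_penalty} is the sum of squared positive parts of the affine maps $D_{\mathrm{ntn}} - \Cntn$ and $D_{\mathrm{fib}} - \Cfib$, so it is continuous on $\mathcal{A}$.
\end{itemize}
Hence $\Pot$ in \eqref{eq:potential_function} is continuous.

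\textbf{Existence of a maximizer.} The set $\mathcal{A}$ is nonempty and compact. By the Weierstrass extreme value theorem, $\Pot$ attains its maximum at some $\bfAlpha^* = (\alphaDL{}^*, \alphaUL{}^*) \in \mathcal{A}$.

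\textbf{The maximizer is a Nash equilibrium.} Take any unilateral deviation $\alphaDL \in [0,1]^N$. By the exact potential property \eqref{eq:EPG_condition} from Theorem~\ref{thm:potential_function},
\[
\JDL(\alphaDL{}^*,\alphaUL{}^*) - \JDL(\alphaDL,\alphaUL{}^*) = \Pot(\alphaDL{}^*,\alphaUL{}^*) - \Pot(\alphaDL,\alphaUL{}^*) \geq 0 .
\]
The inequality holds because $\bfAlpha^*$ globally maximizes $\Pot$. This is exactly \eqref{eq:NE_DL}. The same argument for the uplink agent gives \eqref{eq:NE_UL}. By Definition~\ref{def:NE}, $\bfAlpha^*$ is a pure-strategy Nash equilibrium.

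\textbf{Main obstacle.} The only delicate point is that the statement of Theorem~\ref{thm:potential_function} has a one-line proof. We should confirm the identity explicitly. The difference $\JDL - \Pot = -\sum_{v\in\UUL}\pi_v U_v(\alpha_v^{\mathrm{UL}})$ depends only on $\alphaUL$, so it cancels under any downlink-only deviation. Symmetrically, $\JUL - \Pot$ depends only on $\alphaDL$.

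\textbf{Fallback argument.} As a cross-check that does not depend on the potential structure, existence also follows from Debreu--Glicksberg--Fan. Each $J_i$ is continuous, and each is concave in the player's own strategy:
\begin{itemize}[leftmargin=*]
\item $T_u$ is the logarithm of an affine function, hence concave.
\item $L_u$ and $R_u$ are affine.
\item $-P_u$ is concave, since $P_u$ is a sum of squared positive parts of affine maps, which are convex.
\item $-\mu C$ is concave for the same reason.
\end{itemize}
Together with compact convex strategy sets, these conditions give a pure Nash equilibrium.
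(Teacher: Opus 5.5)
Your proof is correct and takes the same route as the paper: it uses the extreme value theorem to get a maximizer of the continuous potential $\Pot$ on the compact set $\mathcal{A}$, then uses the exact-potential identity to conclude that this maximizer is a pure Nash equilibrium. You also spell out two things the paper only asserts, namely the continuity of $\Pot$ and the identity $\JDL - \Pot = -\sum_{v\in\UUL}\pi_v U_v(\alpha_v^{\mathrm{UL}})$; your Debreu--Glicksberg--Fan fallback is a valid independent argument but is not needed.
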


\begin{proof}
Since $\Gamma$ is an exact potential game and the strategy space
$\mathcal{A} = [0,1]^{N+M}$ is compact, a maximizer of $\Pot$ over $\mathcal{A}$ exists
by the Extreme Value Theorem. Every maximizer of an exact potential function is a Nash
equilibrium \cite{MONDERER1996124}.
\end{proof}

\begin{lemma}[Uniqueness of Nash equilibrium]
\label{lem:NE_uniqueness}
The game $\Gamma$ has a unique Nash equilibrium
$\bfAlpha^* \in [0,1]^{N+M}$.
\end{lemma}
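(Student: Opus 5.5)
The plan is to show that $\Pot$ in \eqref{eq:potential_function} is \emph{strictly concave} and continuously differentiable on $\mathcal{A}=[0,1]^{N+M}$. Then every Nash equilibrium of $\Gamma$ is shown to be a maximizer of $\Pot$ over $\mathcal{A}$. Since a strictly concave function on a convex compact set has exactly one maximizer, uniqueness follows, and Lemma~\ref{lem:NE_existence} supplies existence.

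\textbf{Step 1 (concavity of each term).} I would go through the terms of $\Pot$ one by one, using Table~\ref{tab:utility_terms}.
\begin{itemize}
\item $L_u$ and $R_u$ are affine in $\alpha_u$, so they contribute nothing to the Hessian.
\item Each $m_k(\alpha_u)$ is affine in $\alpha_u$. Hence $\max(0,m_k-s_k)^2$ is the composition of the convex, $C^1$, nondecreasing squared hinge with an affine map. So $P_u$ is convex and $C^1$, and the term $-w_P^{c_u}\lambda^{c_u}P_u$ is concave.
\item Likewise $D_{\mathrm{ntn}}$ and $D_{\mathrm{fib}}$ are affine in $\bfAlpha$ by \eqref{eq:own_load}--\eqref{eq:D_fib}. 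Hence $C$ is convex and $C^1$, and $-\mu C$ is concave.
\item $T_u(\alpha_u)=\ln\bigl(1+\Cfib+\alpha_u(\Cntn-\Cfib)\bigr)$ has second derivative $-(\Cntn-\Cfib)^2/(1+\alpha_u\Cntn+(1-\alpha_u)\Cfib)^2$. This is strictly negative provided $\Cntn\neq\Cfib$.
\end{itemize}

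\textbf{Step 2 (strict concavity).} Since $\Pot$ is only $C^1$ (the hinge terms are not twice differentiable at their kinks), I would not argue through a Hessian. Instead I would write $\Pot = f + g$, where
\[
f(\bfAlpha)=\sum_{u\in\UDL}\pi_u w_T^{c_u} T_u(\alpha_u^{\mathrm{DL}})+\sum_{v\in\UUL}\pi_v w_T^{c_v} T_v(\alpha_v^{\mathrm{UL}})
\]
and $g$ collects all remaining terms.

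By Step~1, $g$ is concave. The function $f$ is a separable sum of strictly concave one-dimensional functions, so $f$ is strictly concave on the box. The sum of a strictly concave function and a concave function is strictly concave, so $\Pot$ is strictly concave.

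This step needs some nondegeneracy assumptions, which I would state explicitly:
\begin{itemize}
\item $\pi_u w_T^{c_u}>0$ for every slice;
\item $\Cntn\neq\Cfib$, which is natural since the NTN link is capacity-constrained relative to fiber.
\end{itemize}
I expect this to be the main obstacle. As written, the paper only asserts $w_T^{c_u}\ge 0$, and if $\Cntn=\Cfib$ or $w_T^{c_u}=0$ for some slice, $\Pot$ can be flat along that coordinate and uniqueness can fail. So the proof must make these conditions explicit hypotheses.

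\textbf{Step 3 (every NE maximizes $\Pot$).} Let $\bfAlpha^*$ be a Nash equilibrium. By \eqref{eq:NE_DL}, $\alphaDL{}^*$ maximizes $\JDL(\cdot,\alphaUL{}^*)$ over $[0,1]^N$. By Theorem~\ref{thm:potential_function}, this objective differs from $\Pot(\cdot,\alphaUL{}^*)$ only by a constant. Hence the partial gradient $\nabla_{\alphaDL}\Pot(\bfAlpha^*)$ satisfies the variational inequality
\[
\nabla_{\alphaDL}\Pot(\bfAlpha^*)^\top(\alphaDL-\alphaDL{}^*)\le 0 \quad \text{for all } \alphaDL\in[0,1]^N,
\]
and the same holds for the uplink block.

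Because $\mathcal{A}$ is a Cartesian product, the two block inequalities stack into $\nabla\Pot(\bfAlpha^*)^\top(\bfAlpha-\bfAlpha^*)\le 0$ for all $\bfAlpha\in\mathcal{A}$. For a concave $C^1$ function this is exactly the first-order sufficient condition for a global maximizer, so $\bfAlpha^*\in\arg\max_{\mathcal{A}}\Pot$. Strict concavity from Step~2 makes this maximizer unique, so any two Nash equilibria coincide.
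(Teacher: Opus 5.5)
Your proof is correct. It rests on the same core fact as the paper's: $\Pot$ is strictly concave on the box because of the log-throughput term when $\Cntn\neq\Cfib$. You reach it by a different decomposition, and you supply a step the paper leaves implicit.

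The paper argues through a Hessian $H_{\Pot}(\bfAlpha)=D(\bfAlpha)-\mu\kappa(\bfAlpha)\bm{b}^+(\bm{b}^+)^\top$. Here $D(\bfAlpha)$ is diagonal and negative definite and $\kappa(\bfAlpha)\in\{0,2,4\}$ is indicator-valued. The paper then passes from $(0,1)^{N+M}$ to the closed box by continuity. This glosses over the fact that the Hessian does not exist on the kink sets $\{D_{\mathrm{ntn}}=\Cntn\}$, $\{D_{\mathrm{fib}}=\Cfib\}$ and $\{m_k(\alpha_u)=s_k^{c_u}\}$. Your splitting $\Pot=f+g$ into a strictly concave separable part and a concave remainder needs only first-order smoothness. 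It also works directly on the closed box, so it is the tighter argument.

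Your nondegeneracy caveat is well taken. The paper's strict bound $D_{uu}(\bfAlpha)<0$ already silently requires $\pi_u w_T^{c_u}>0$, although only $\geq 0$ is stated. You are making explicit a hypothesis the paper uses, not adding a new one.

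The paper closes with ``since both conditions hold, the Nash equilibrium is unique,'' leaving the link between equilibria and maximizers of $\Pot$ to the citation. Your Step~3 proves that link: blockwise optimality, the stacked variational inequality on the product set, and first-order sufficiency for concave $C^1$ functions. It correctly leans on $C^1$, which is essential. Without differentiability, blockwise maximizers of even a strictly concave function need not be global maximizers; for example, $(0,0)$ for $\min(x,y)-\epsilon(x^2+y^2)$ on $[-1,1]^2$.

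In exchange, the paper's Hessian form gives an explicit curvature structure: a negative definite diagonal minus a rank-one term. That is a natural starting point for quantitative convergence statements about best-response iteration.
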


\begin{proof}
We verify both conditions as included in \cite{MONDERER1996124}:

\noindent\textbf{Condition (i): Compact and convex strategy space.}
Each per-slice allocation $\alpha_u \in [0,1]$ is confined to a closed, bounded interval. Thus, $\mathcal{A} = [0,1]^N \times [0,1]^M = [0,1]^{N+M}$ is a compact convex subset of $\mathbb{R}^{N+M}$.

\noindent\textbf{Condition (ii): Continuous differentiability and
strict concavity of $\Pot$.}
The potential $\Pot$ is continuously differentiable on the interior $(0,1)^{N+M}$ as every component ($T_u$ is a log of an affine function, $L_u$ and $R_u$ are affine, $P_u$ and the coupling term $C(\bf{\alpha})$ are sums of squared-hinge functions) is $C^1$.

For strict concavity, we compute the Hessian $H_{\Pot}(\bfAlpha) \in \mathbb{R}^{(N+M)\times(N+M)}$ and show
it is negative definite. Since each per-slice utility $U_u(\alpha_u^{\mathrm{DL}})$ depends
only on the scalar $\alpha_u^{\mathrm{DL}}$ and analogously for the UL, the Hessian of
the aggregate utility terms is \emph{diagonal}. 
Accordingly,
\begin{equation}
    H_{\Pot}(\bfAlpha)
    \;=\;
    \underbrace{D(\bfAlpha)}_{\text{diagonal}}
    \;-\;
    \mu\,\kappa(\bfAlpha)\;\bm{b}^+(\bm{b}^+)^\top,
    \label{eq:Hessian_Phi}
\end{equation}
where $\bm{b}^+ = (\hat{b}_1^{\mathrm{DL}},\ldots,\hat{b}_N^{\mathrm{DL}},
\hat{b}_1^{\mathrm{UL}},\ldots,\hat{b}_M^{\mathrm{UL}})^\top > \bm{0}$ is the vector of
estimated peak demands, and $\kappa(\bfAlpha) \in \{0,2,4\}$ is a non-negative scalar
arising from the activation of the oversubscription indicators:
\begin{equation}
    \kappa(\bfAlpha)
    \;=\;
    2(
        \mathbf{1}\!\bigl[D_{\mathrm{ntn}} > \Cntn\bigr]
        + \mathbf{1}\!\bigl[D_{\mathrm{fib}} > \Cfib\bigr]
    ).
\end{equation}
The diagonal matrix $D(\bfAlpha)$ has entries, for each slice $u$,
\begin{multline}
    D_{uu}(\bfAlpha)
    \;=\;
    -\pi_u
    \left[
        \frac{w_T^{c_u}(\Delta\hat{C})^2}{\bigl(1 + \Cfib + \alpha_u\Delta\hat{C}\bigr)^2}
        \;+\; \right. \\ \left.
        w_P^{c_u}\,\lambda^{c_u}
        \sum_{k} \frac{2(\Delta m_k)^2}{(s_k^{c_u})^2}
        \,\mathbf{1}[m_k(\alpha_u) > s_k^{c_u}]
    \right]
    \;\leq\; 0,
\label{eq:diagonal_entries}
\end{multline}
where $\Delta m_k = \mathtt{metric}_{k,\mathrm{ntn}} - \mathtt{metric}_{k,\mathrm{fib}}$
and the inequality holds because $w_T^{c_u}, w_P^{c_u}, \lambda^{c_u}, \pi_u \geq 0$.
Under the standing assumption that $\Cntn \neq \Cfib$ (NTN and fiber have different
capacities), we have $\Delta\hat{C} \neq 0$, and therefore
\begin{multline}
    D_{uu}(\bfAlpha)
    \;\leq\;
    -\pi_u \cdot
    \frac{w_T^{c_u}(\Delta\hat{C})^2}{\bigl(1 + \Cfib + \alpha_u\Delta\hat{C}\bigr)^2}
    \;<\; 0.
    \label{eq:diagonal_strict_neg}
\end{multline}
Hence $D(\bfAlpha) \prec 0$ (strictly negative definite diagonal matrix).
Furthermore, $-\mu\kappa(\bfAlpha)\,\bm{b}^+(\bm{b}^+)^\top$ is a negative
semi-definite rank-1 correction. Therefore $H_{\Pot}(\bfAlpha) \prec 0$ for all $\bfAlpha \in (0,1)^{N+M}$, and since
$\Pot$ is continuous on the compact set $[0,1]^{N+M}$, it follows that $\Pot$ is
\emph{strictly concave} on $[0,1]^{N+M}$.

Since both conditions hold, the Nash
equilibrium of the load-balancing game $\Gamma$ is \textbf{unique}.
\end{proof}

\section{Experimental Setup}
The experimental evaluation is conducted on a geographically distributed 5G testbed, with the RAN deployed in Malaga and the core deployed in Ghent. This setup emulates a hybrid backhaul scenario in which TN and NTN links coexist and are jointly used for traffic delivery. As depicted in Figure~\ref{fig:archi}, 
a set of UE devices connect to Nokia Airscale gNB, which forwards uplink traffic to an edge-based load-balancing virtual machine acting as the uplink routing agent. This edge agent connects to a corresponding downlink load balancer at the core via two parallel virtual tunnels: a low-latency terrestrial fiber link and an NTN satellite link utilizing Starlink.

\begin{table}[b]
\centering
\caption{SLA constraints per traffic class.}
\label{tab:sla}
\renewcommand{\arraystretch}{1.15}
\setlength{\tabcolsep}{4pt}
\footnotesize
\begin{tabular}{@{}l c c c c@{}}
\toprule
\textbf{Traffic Class} & \textbf{Priority} & \textbf{Max RTT} & \textbf{Max Jitter} & \textbf{Max Loss} \\
 &  & \textbf{(ms)} & \textbf{(ms)} & \textbf{(\%)} \\
\midrule
V2X & High   & 60  & 15  & 0.5 \\
Emergency & High & 70 & 20 & 0.5 \\
Video Streaming & Medium    & 200 & 80 & 3.0 \\
IoT & Low    & 500 & 150 & 10.0 \\
Best-effort & Low    & 800 & 100 & 5.0 \\

\bottomrule
\end{tabular}
\end{table}

To evaluate the framework defined in Section \ref{sec:model}, bidirectional traffic is generated to emulate five distinct slices: V2X, emergency services, video streaming, IoT, and best-effort web traffic. We configured the testbed with the specific transport-layer SLA constraints detailed in Table \ref{tab:sla}. These values are configured to reflect the geographic distance and the physical characteristics of the Starlink NTN environment. Generated flows representing the five slices are explicitly non-stationary, incorporating randomized burst durations, pause intervals, and variable bandwidth demands.

The real-time network state is captured by a custom metrics exporter running natively on the load balancers that exposes continuous telemetry such as interface-level throughput, RTT, jitter, and packet loss to a time-series database. Routing decisions are governed by two separate controllers deployed at the uplink and downlink load balancers. These controllers utilize the Best Response Iteration (BRI) algorithm to solve the proposed potential game and determine the optimal traffic allocation vectors for each slice. Since the potential function $\Pot$ is strictly concave, BRI is guaranteed to converge to the unique pure Nash equilibrium.

The proposed potential game approach is benchmarked in the testbed against a set of representative baseline routing strategies commonly used in hybrid and slice-aware traffic steering, including equal split, weighted round robin, random allocation, and a purpose-built SLA-aware heuristic. The heuristic baseline computes a per-slice score for each link by combining normalized RTT, loss, and jitter headroom with traffic-class weights on both links and performs redistribution sweeps that shift load from any link exceeding 90\% utilization to the alternate link. 
The main evaluation metrics are aggregate network metrics and per-slice SLA violation rates.

\section{Results and Discussion}
Table \ref{tab:metrics} summarizes the average network metrics recorded across the 300-second non-stationary traffic episodes. Here, the RTT represents the effective RTT proportionally weighted by the actual volume of traffic steered over the TN and NTN links. Similarly, the reported throughput reflects the total bidirectional traffic achieved by the system under the same traffic scenario.

The table shows that the potential game framework consistently outperforms the other algorithms in maximizing overall network efficiency. It achieved the lowest effective RTT (51.683 ms), the highest bidirectional throughput (78.195 Mbps), and the lowest packet loss rate of 0.2\%. Although these aggregate improvements are modest, the primary benefit lies in the substantial reduction in SLA violations per slice, as shown separately in the heatmap of Figure 2. Nonetheless, this marginal improvement highlights the benefits of the utility evaluation to proactively avoid congestion. Fairness is calculated using Jain's fairness index \cite{jainQuantitativeMeasureFairness1998}, applied to per-link throughput utilization across the two backhaul paths. 

\begin{figure}[b]
    \centering
    \includegraphics[width=0.90\linewidth]{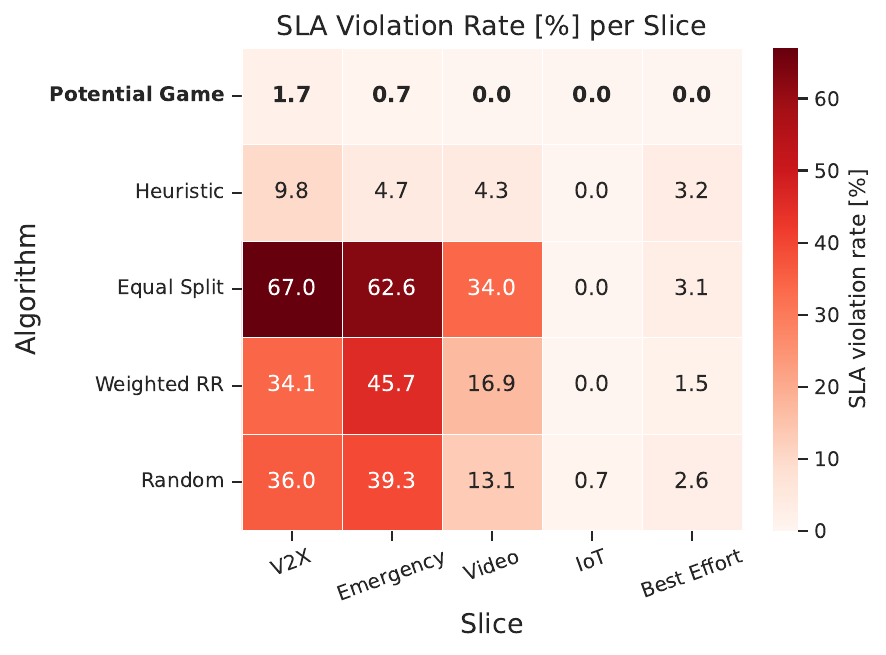}
    \caption{Heatmap of the average SLA violation rates per traffic slice.}
    \label{fig:heatmap}
\end{figure}

The Equal Split baseline naturally achieved a perfect fairness score of 1.000, as it indiscriminately distributes traffic. However, enforcing strict arithmetic fairness across highly asymmetric links increases packet loss and RTT. By forcing 50\% of latency-sensitive and high-bandwidth traffic onto the constrained satellite link, Equal Split yields the highest packet loss (1.445\%) among all methods. In contrast, the potential game achieves a fairness score of 0.963 while simultaneously delivering the best throughput, latency, and reliability, proving that the algorithm successfully balances individual slice requirements without starving any specific flow.

\begin{table}[t]
\centering
\caption{Average network-level performance metrics per algorithm.}
\label{tab:metrics}
\renewcommand{\arraystretch}{1.15}
\setlength{\tabcolsep}{4pt}
\footnotesize
\begin{tabular}{@{}l c c c c@{}}
\toprule
\textbf{Algorithm} & \textbf{RTT} & \textbf{Loss} & \textbf{Throughput}  & \textbf{Fairness} \\
 & \textbf{(ms)} & \textbf{(\%)} & \textbf{(Mbps)} & \\
\midrule
Potential Game & \textbf{51.683} & \textbf{0.200} & \textbf{78.195} & 0.963 \\
Heuristic &	53.884	& 0.349 & 75.307  & 0.855 \\
Equal Split	& 54.801 & 1.445 & 75.193 & \textbf{1.000} \\
Weighted RR & 53.463 & 0.901 & 74.374 & 0.709 \\
Random	& 54.793 & 1.164 & 70.002 & 0.732\\
\bottomrule
\end{tabular}
\end{table}

While aggregate metrics provide a macro-level view of system efficiency, the controller's primary objective is to maintain strict SLA compliance across heterogeneous slices. Figure \ref{fig:heatmap} displays a heatmap of the SLA violation for each load-balancing algorithm over the evaluation period. 

The potential game controller demonstrates superior performance in safeguarding strict SLA requirements. For less-sensitive slices, such as IoT, video, and best effort, it completely eliminates  SLA violations. For sensitive slices, it reduces the SLA violations to 1.7\% for V2X and 0.7\% for the emergency slice. These residual violations occur only during severe congestion spikes, when even the optimal allocation cannot fully prevent brief queuing delays on the terrestrial link for slices with tight SLA requirements. 


The IoT slice remains highly robust across almost all configurations. This indicates that its loose SLA requirements are easily satisfied across the hybrid backhaul. The primary challenge in backhaul load-balancing is intelligently steering ultra-low-latency V2X and emergency flows alongside high-bandwidth video and best-effort flows to avoid the NTN latency bottleneck and terrestrial link congestion. 

Finally, the temporal evolution of the link-specific metrics and the objective function is depicted in Figure \ref{fig:timeseries}. Throughout the evaluation period, the potential function $\Pot$ exhibits a stable baseline convergence plateau, punctuated by transient, distinct drops. These drops correlate with mathematical penalties incurred when SLA constraints are breached, primarily driven by RTT degradation (second panel) and packet loss events (third panel)  during severe congestion spikes.

 \begin{figure}[t]
    \centering
    \includegraphics[width=0.98\linewidth]{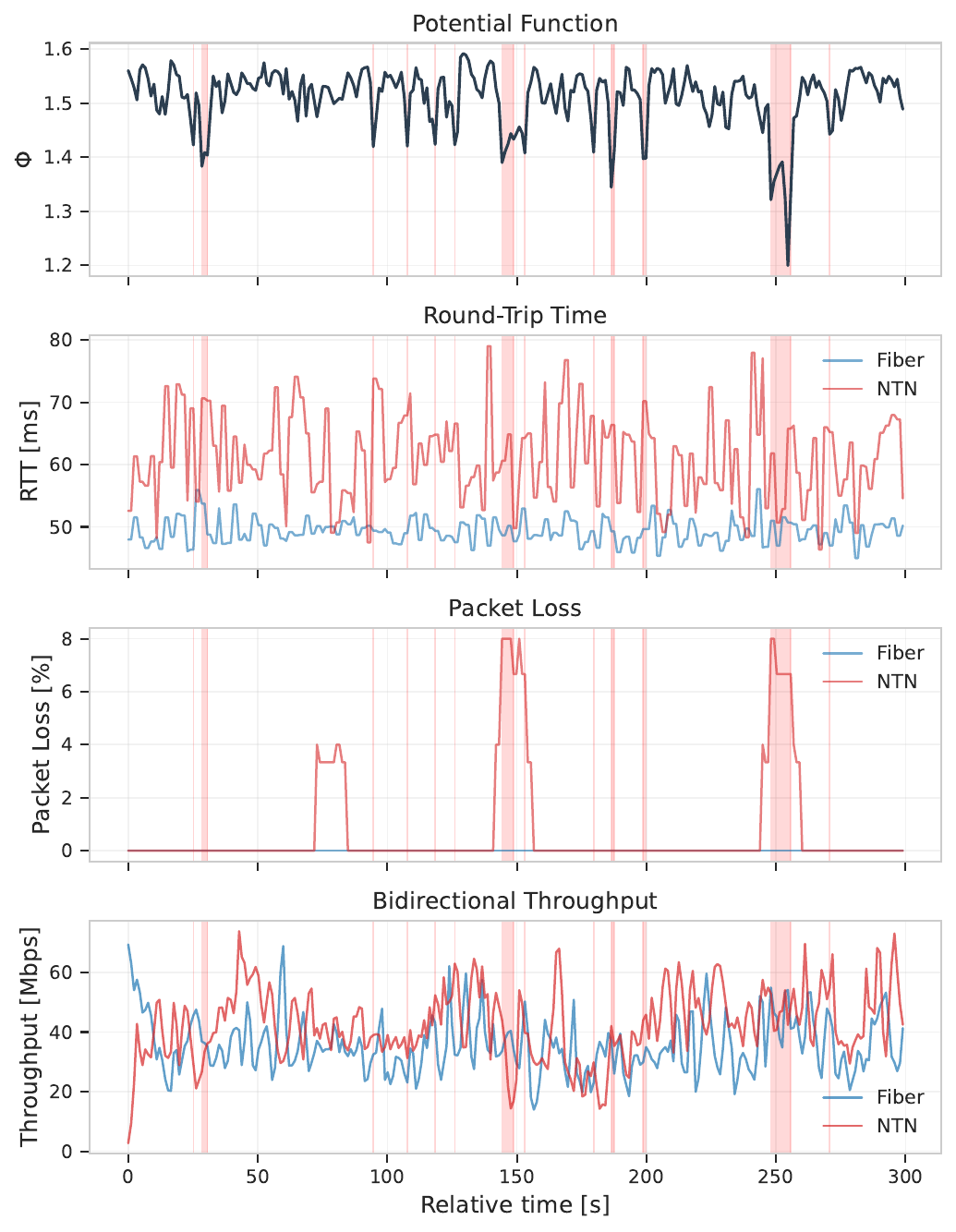}
    \caption{Temporal evolution of the potential function and network metrics.}
    \label{fig:timeseries}
\end{figure}

It is critical to note that high latency does not universally penalize the system. Drops in $\Pot$ are triggered only when physical network impairments violate an assigned SLA, such as when congestion forces time-sensitive V2X or emergency slices to experience queuing delays or packet drops. 

Furthermore, Figure \ref{fig:timeseries} highlights the reactive agility of the potential game framework. Upon detecting a sub-optimal utility state (a drop in $\Pot$), the decentralized controller immediately executes new best-response sweeps and shifts the fractional allocation to alleviate the bottleneck. Consequently, the potential function sharply recovers to its maximized equilibrium state. This pattern confirms that the controller continuously tracks a shifting pure Nash equilibrium in real time.

The proposed controller exhibits low computational overhead, making it suitable for real-time deployment. Each agent solves a continuous optimization problem over its per-slice allocation variables, with linear complexity  $\mathcal{O}(N)$ or $\mathcal{O}(M)$ per iteration.
In our testbed deployment, each best-response computation completes in an average of 4.412~ms on a standard desktop CPU (Intel Core i9-9900K). 
The alternating best-response dynamics typically reach equilibrium in less than five iterations under typical load conditions.
\section{Conclusion}

Integrating hybrid TN-NTN backhauls enhances 5G/6G resilience but complicates SLA management. To address this, we introduced a decentralized load-balancing controller that models per-slice traffic steering as an exact potential game. Testbed evaluations demonstrated that our approach significantly outperforms heuristic baselines under non-stationary loads, ensuring strict SLA compliance. Future work will adapt this framework for moving LEO constellation dynamics and integrate it with 5G core orchestration for autonomous, cross-layer SLA management.

\bibliographystyle{myIEEEtran}
\bibliography{references}

\end{document}